\newcommand{\vertiii}[1]{{\left\vert\kern-0.25ex\left\vert\kern-0.25ex\left\vert #1 
		\right\vert\kern-0.25ex\right\vert\kern-0.25ex\right\vert}}
\newcommand{\eq}[1]{\begin{equation} \begin{aligned}#1\end{aligned} \end{equation}} 
\newcommand{\tr}{\mathrm{Tr}}
\newcommand{\ketbra}[2]{| #1 \rangle \langle #2 |}
\newcommand{\sugg}[1]{#1}
\newcommand{\trans}{T}
\newcommand{\mix}{\mathcal{M}}
\newcommand{\pur}{\mathcal{P}}
\newcommand{\entr}{\mathcal{S}}
\newcommand{\conc}{\mathcal{G}}
\newcommand{\cent}{\lambda}
\newcommand{\eu}{\text{e}}
\newcommand{\stkout}[1]{\ifmmode\text{\sout{\ensuremath{#1}}}\else\sout{#1}\fi}
\newcommand{\op}{F}
\newcommand{\opp}{G}
\theoremstyle{plain}
\newtheorem{thm}{\protect\theoremname}
  \theoremstyle{plain}
  \newtheorem{lem}[thm]{\protect\lemmaname}
    \newtheorem{proposition}[thm]{\protect\propositionname}
  \theoremstyle{remark}
  \providecommand{\claimname}{Claim}
  \providecommand{\corollaryname}{Corollary}
  \providecommand{\lemmaname}{Lemma}
  \providecommand{\factname}{Fact}
  \providecommand{\propositionname}{Proposition}
\providecommand{\theoremname}{Theorem}
\begin{document}
		
	\title{Entanglement, loss, and quantumness: When balanced beam splitters are best}

\author{Noah Lupu-Gladstein}
\affiliation{National Research Council of Canada, 100 Sussex Drive, Ottawa, Ontario K1N 5A2, Canada}
\affiliation{Department of Physics, University of Ottawa, 25 Templeton Street, Ottawa, Ontario, K1N 6N5 Canada}
\author{Anaelle Hertz}
\affiliation{National Research Council of Canada, 100 Sussex Drive, Ottawa, Ontario K1N 5A2, Canada}
\author{Khabat Heshami}
\affiliation{National Research Council of Canada, 100 Sussex Drive, Ottawa, Ontario K1N 5A2, Canada}
\affiliation{Department of Physics, University of Ottawa, 25 Templeton Street, Ottawa, Ontario, K1N 6N5 Canada}
\affiliation{Institute for Quantum Science and Technology, Department of Physics and Astronomy, University of Calgary, Alberta T2N 1N4, Canada}
\author{Aaron Z. Goldberg}
\affiliation{National Research Council of Canada, 100 Sussex Drive, Ottawa, Ontario K1N 5A2, Canada}

	\setcounter{tocdepth}{1}

	\begin{abstract}
Quantum optics routinely uses beam splitters to generate entanglement, including in pioneering experiments conducted by Hanbury-Brown and Twiss and Hong, Ou, and Mandel. The quantum interference at beam splitters lies at the heart of what makes boson sampling hard to emulate by classical computers and is a vital component of quantum computation with light. Yet, despite overwhelming positive evidence, the conjecture that beam splitters with equal reflection and transmission probabilities generate the most entanglement for any state interfered with the vacuum  has remained unproven for almost two decades [Asb\'oth \textit{et al.}, Phys. Rev. Lett. \textbf{94}, 173602 (2005)]. We prove this conjecture for ubiquitous entanglement monotones including mixed-state generalizations of entanglement entropy and purity by uncovering monotonicity and convexity with respect to photon loss for these monotones. \sugg{At the same time, we highlight an infinite class of lesser-used monotones for which the conjecture fails.} Because beam splitters are so fundamental, our results yield numerous corollaries for quantum optics, including proof of a recent conjecture for the evolution of a measure of quantumness through loss \sugg{and a more efficient computational strategy for optimizing entanglement generation over linear optics. These results justify the value of seeking mathematical rigour behind commonly accepted facts and the danger of trusting them unconditionally.} 	\end{abstract}

 	\maketitle

Beam splitters are foundational to entanglement generation~\cite{HongOuMandel1987}, physical and mathematical models for optical loss and detector inefficiencies~\cite{YuenShapiro1978,YuenShapiro1980,CavesCrouch1987,Jeffersetal1993}, mathematical descriptions of mode transformations in classical and quantum optics~\cite{BornWolf1999,SalehTeich2007,MandelWolf1995}, interferometry~\cite{MichelsonMorley1887,LIGO2011}, and measuring wave-particle duality of light~\cite{Grangier_1986}. Beam splitters convert nonclassicality\sugg{---the failure of states to be modeled as probabilistic mixtures of coherent states~\cite{Sudarshan1963,Glauber1963,BachLuxmannEllinghaus1986,Sperling2016}---} into entanglement between modes.
Generating entanglement from nonclassicality is essential to the classical intractability of boson sampling~\cite{TroyanskyTishby1996,AaronsonArkhipov2013,Zhongetal2021,Arrazolaetal2021} and photonic quantum computation~\cite{Knilletal2001,BrowneRudolph2005,Koketal2007,MariEisert2012,TakedaFurusawa2019,SlussarenkoPryde2019,Flaminietal2019}.

Entanglement is a hallmark of quantum theory~\cite{EPR1935,Bell1964,Aspectetal1981} and crucial to many of its practical advantages~\cite{Bennettetal1993,Bouwmeesteretal1997,JozsaLinden2003,Mitchelletal2004,PezzeSmerzi2009}, making entanglement's manipulation, generation, and characterization essential.
The entangling properties of beam splitters have been studied from many perspectives 
\cite{Tanetal1991,HuangAgarwal1994,Paris1999,Wolfetal2003,Asbothetal2005,Tahiraetal2009,Springeretal2009,Pianietal2011,Ivanetal2012,Jiangetal2013,Killoranetal2014,Miranowiczetal2015,Maetal2016,BoseKumar2017,Tserkisetal2020,Steinhoff2024}.
They imply an important result for noisy quantum channels~\cite{Marietal2014} and are crucial for distributing, measuring, and distilling entanglement in quantum networks~\cite{Takahashietal2010,Sangouardetal2011,Loetal2012,Lucamarinietal2018}. 
These results motivated a quantification of \sugg{quantum-}optical nonclassicality called entanglement potential: the amount of entanglement a single-mode state can generate at a beam splitter when interfered with the vacuum \cite{Asbothetal2005}. Entanglement potential has been measured experimentally~\cite{Zavattaetal2007,Parigietal2007manipulating,Podhoraetal2020,Rohetal2023,Kadlecetal2024}. When entanglement potential was defined it was anecdotally observed that the most entanglement for any input state seemed to  always be generated by perfectly balanced beam splitters with equal transmission and reflection probabilities of 50\%~\cite{Wolfetal2003,Asbothetal2005,Tahiraetal2009,Brunellietal2015,Geetal2015,Monteiroetal2015,Wangetal2016,Arkhipovetal2016}. 

The 50\% condition appears in a number of other places, including distilling two-qubit entanglement~\cite{Sunetal2022,Nosratietal2024}, the switching of degradability properties of a channel~\cite{Leviantetal2022}, certain (but not all \cite{Silberhornetal2002,Pirandolaetal2020}) quantum key distribution schemes becoming useless after 50\% loss~\cite{GrosshansGrangier2002, Pirandolaetal2009, Takeokaetal2014, Pirandolaetal2017}, and balanced beam splitters often being ideal for interferometry~\cite{JarzynaDemkowiczDobrzanski2012,Geetal2020} and for quantum scissors~\cite{Peggetal1998}. States lose a witness of nonclassicality, the Wigner negativity, when they suffer at least 50\% loss~\cite{RadimFilip}, which touches upon the classical simulability of boson sampling under sufficient loss~\cite{RahimiKesharietal2016,OszmaniecBrod2018,Qietal2020}.

Despite the myriad of instances where 50\% loss achieves special importance, there has been no proof that balanced beam splitters generate the most entanglement for all input states~\cite{Asboth2024}\sugg{; therefore, despite its abundant use, the definition of entanglement potential has not yet been rigorously justified}. The optimality for Gaussian states is straightforward when measuring entanglement with a number of quantifiers~\cite{Wolfetal2003,Tahiraetal2009,Goldbergetal2023}. Proofs for Fock states can be gleaned from the statistics literature for entanglement measured by purity or entropy~\cite{SheppOlkin1981,GavreaIvan2014,Nikolov2014,HillionJohnson2017,Rasa2018,Rasa2019,HillionJohnson2019,Alzer2020}. Majorization has only offered results for comparing certain ranges of transmission probabilities~\cite{Gagatsosetal2013,vanHerstraetenetal2023}. Entropy power inequalities have been derived between the inputs and outputs of beam splitters~\cite{DePalmaetal2014,Shinetal2023}, but not between different beam splitters. We prove balanced beam splitters are best for all states and three quantifiers of entanglement via Theorems~\ref{thm:entr}, \ref{thm:pur}, and \ref{thm:conc}.

\begin{proposition}
For any input state interfered with vacuum on a beam splitter, the 50/50 beam splitter generates the most entanglement according to the entanglement monotones obtained via the convex-roof extension~\cite{BengtssonZyczkowski2006,Horodeckietal2009} (Eq.~\eqref{eq:convex roof extension}) of the following functions
\begin{enumerate}
    \item Entanglement entropy (Eq.~\eqref{eq:entropy})
    \item Mixedness (Eq.~\eqref{eq:mixedness})
    \item $G$-concurrence (Eq.~\eqref{eq:gconcurrence})
\end{enumerate}
\end{proposition}

\sugg{Beyond these three specific monotones, we show by counterexample that balanced beam splitters are suboptimal according to an infinite class of less commonly employed monotones~\cite{HillionJohnson2017}. Although we are not the first to discover counterexamples, we highlight them to emphasize the richness of linear optical entanglement and the subtlety of the original conjecture. }

\textit{Definitions (beam splitters, loss, entanglement monotones)}---A beam splitter mixes two modes via complex transmission and reflection coefficients. Without loss of generality, we consider a one-parameter family given by the real transmission probability $\trans \in [0, 1]$. Mathematically, a bosonic mode annihilated by $a$ impinging on a beam splitter with another bosonic mode ($b$) transforms with unitary $B(T)=\eu^{(a b^\dagger-a^\dagger b) \arccos\sqrt{\trans}}$ as~\cite{weedbrook} \begin{equation}
    a\to B(\trans)a B(\trans)^\dagger=\sqrt{\trans}a+\sqrt{1-\trans}b.
\end{equation}
When the $b$ mode begins in the vacuum and the other mode is any superposition of Fock states $|\psi\rangle\!=\!\sum_n \psi_n |n\rangle$,
 the resultant state is 
$B(\trans)|\psi,0\rangle=|\Psi_\trans\rangle=\sum_n \psi_n \sum_m \sqrt{\binom{n}{m}\trans^m (1-\trans)^{n-m}}|m,n-m\rangle$.

For nonclassical $|\psi\rangle$ and $\trans \in (0, 1)$, $|\Psi_\trans\rangle$ is entangled \cite{Aharonovetal1966,Kimetal2002,Xiangbin2002}. Entanglement presents mathematically in the Schmidt coefficients of $|\Psi_\trans\rangle$~\cite{Nielsen1999,Vidal2000}, which are the singular values of the Schmidt matrix $M_{mn}(\trans)=\langle m,n|\Psi_\trans\rangle$ and the square roots of the eigenvalues of the reduced state $\rho_\trans=\tr_b[|\Psi_\trans\rangle \langle \Psi_\trans|]$. $|\Psi_\trans\rangle$ is separable if and only if it has exactly $1$ non-zero Schmidt coefficient. Conversely, every Schur-concave function of the Schmidt coefficients is a pure-state entanglement monotone~\cite{Nielsen1999,Vidal2000}, where the latter are defined by never increasing under local operations and classical communication~\cite{BengtssonZyczkowski2006,Horodeckietal2009}.

When $\trans = 0$, the state becomes separable again. In fact, it becomes a swapped version of the initial state: $|\Psi_0\rangle = |0, \psi\rangle$. In general, the transformation $\trans\rightarrow 1-\trans$ is equivalent to swapping the modes in $|\Psi_\trans\rangle$ and transposing the Schmidt matrix. The Schmidt coefficients are invariant under this transformation, a symmetry that is exploited throughout this work. In particular, the symmetry implies that $\trans = 1/2$ is always a local optimum of any entanglement monotone based on Schmidt coefficients. The challenge of proving that 50/50 beam splitters generate the most entanglement amounts to showing that, for particular monotones, this local optimum is in fact the global maximum.

If the initial state of the $a$ mode is a mixed state $\rho_1$, the joint state after the beam splitter is mixed as well. Ignoring (tracing out) the $b$ mode results in $\rho_\trans = \tr_b[B(\trans)\rho_1\otimes|0\rangle\langle 0|B(\trans)^\dagger]$. The transformation $\rho_1 \rightarrow \rho_\trans = \mathcal{E}_\trans[\rho_1]$ is the standard model for optical loss~\cite{CavesCrouch1987,Jeffersetal1993,NielsenChuang2000,Leonhardt2003} and detector inefficiencies~\cite{YuenShapiro1978,YuenShapiro1980,Paul1982,Yurke1985,LeonhardtPaul1993} with transmission probability $\trans$. The loss channel $\mathcal{E}_\trans$ acts linearly on the input state and multiplicatively in $\trans$ in the sense that $\mathcal
{E}_{\trans_1}\circ \mathcal
{E}_{\trans_2}=\mathcal
{E}_{\trans_1\trans_2}$. In the Heisenberg picture, loss transforms the annihilation operator by $a \rightarrow \sqrt{\trans} a$~\cite{EkertKnight1991}. This mapping is the classical analog of damping an electric field amplitude $A$ via $A \rightarrow \sqrt{\trans} A$, as in the Beer-Lambert law~\cite{BornWolf1999,Jackson1999,Steck2016} and resembles the statistical binomial thinning operator~\cite{Jung2006}. Loss is generated by the Markovian dynamics~\cite{MandelWolf1995,ScullyZubairy1997,GardinerZoller2004,GerryKnight2005,supp} of zero-temperature amplitude damping~\cite{GardinerCollett1985}, itself a quantum generalization of the classical pure death process~\cite{Karlin2014}.

The link between beam splitters and loss arises because both processes model each photon as having an independent and identical probability $1\!-\!\trans$ to transfer into another mode and never return. A beam splitter accomplishes this by reflecting each photon into the same, well-controlled mode, whereas loss describes photons being absorbed or scattered into many different modes. Although the two processes are physically different, their mathematical effects on the original mode are equivalent. Modeling loss using fictitious beam splitters is known in quantum field theory as the thermo-field technique~\cite{Leonhardt2003,Umezawaetal1982,Takahashietal1996}.


The entanglement monotones considered in this work all derive from R\'enyi entropies. The R\'enyi entropy $H_\alpha$ of order $\alpha$ is defined for any mixed state $\rho$ as $H_\alpha(\rho)= \lim_{\alpha' \rightarrow \alpha}\tfrac{1}{1-\alpha'}\log\tr[\rho^{\alpha'}]$. Every order $\alpha$ defines a corresponding pure-state entanglement monotone $H_\alpha(\tr_b[|\Psi\rangle \langle \Psi|)]$ and these pure-state monotones generate mixed-state monotones via the convex-roof extension~\cite{BengtssonZyczkowski2006,Horodeckietal2009} (Eq.~\eqref{eq:convex roof extension}). Our work focuses on three functions related to R\'enyi entropy.

The first is entanglement entropy~\cite{VonNeumann1932}
\begin{equation}\label{eq:entropy}
    \entr_\psi(\trans) = -\tr[\rho_\trans\log \rho_\trans] = H_{1}(\rho_\trans) ,
\end{equation}
which is important in fields including condensed matter physics~\cite{HuiHaldane2008}, quantum gravity~\cite{Headrick2010}, and uncertainty relations~\cite{ColesEUR}. It enjoys asymptotic continuity for interconverting between large numbers of entangled states~\cite{PopescuRohrlich1997,Vidal2000,Horodeckietal2009}. As well, many mixed-state entanglement monotones reduce to entanglement entropy for pure states, including distillable entanglement and relative entropy of entanglement (as in the original conjecture~\cite{Asbothetal2005}).
Mixedness $\mix_\psi(\trans)$ is defined in terms of the purity $\pur_\psi(\trans)$~\cite{Manfredi2000} as
\begin{equation}\label{eq:mixedness}
    1 - \mix_\psi(\trans)=\pur_\psi(\trans) = \tr[\rho_\trans^2]=\eu^{-H_2(\rho_\trans)}
\end{equation}
and is a close relative of R\'enyi 2-entropy, also known as linear or collision entropy. These functions provide more powerful entanglement witnesses than all Bell–Clauser-Horne-Shimony-Holt tests~\cite{Bovinoetal2005} and exhibit important convexity properties in terms of quantum states~\cite{LiebRuskai1973,Lindblad1974,Lindblad,Uhlmann1977}.
Finally, the $G$-concurrence~\cite{BarnumLinden2001,Sinoleckaetal2002,Gour2005} of a state with finite Schmidt rank $d$ is
\begin{equation}\label{eq:gconcurrence}
    \conc_\psi(\trans)=d\det [\rho_\trans]^{1/d} = \exp \left (\frac{\partial H_\alpha}{\partial \alpha}\Bigg|_{\alpha=0} \right ).
\end{equation}

\textit{Concavity of the von Neumann entropy}---We first prove the concavity of the von Neumann entropy of a state subject to loss:
\begin{thm}
\label{thm:entr}
    The von Neumann entropy of any state subject to loss is a concave function of loss; ${\forall \rho_1 \geq0}, \ {\trans\in [0, 1]:}\  \partial ^2 H_1(\mathcal{E}_\trans[\rho_1])/\partial \trans^2\leq 0$.
    Consequently, any pure state interfered with the vacuum at a beam splitter generates the greatest entanglement entropy when the beam splitter has transmission $\trans=1/2$.
\end{thm}

The derivative of entropy with respect to $\trans$ is a difference between two relative entropies. When the initial state is pure, $\rho_1 = |\psi\rangle \langle \psi|$,
\begin{align}
    \frac{\partial \entr_\psi(\trans)}{\partial \trans}
    = &\langle a^\dagger a \rangle_\psi\! \left[
  \!  H \left ( \mathcal{E}_{1-\trans} \! \left [\frac{a  |\psi\rangle\langle\psi| a^\dagger}{\langle a^\dagger a \rangle_\psi} \right ] \!
    \Bigg\| \mathcal{E}_{1-\trans}  \left [|\psi\rangle\langle\psi| \right ] \right ) \right.\nonumber
    \\&\left.- H \left ( \mathcal{E}_{\trans} \left [\frac{a |\psi\rangle\langle\psi| a^\dagger}{\langle a^\dagger a \rangle_\psi} \right ]\Bigg\| \mathcal{E}_{\trans}[|\psi\rangle\langle\psi|] \right )
    \right].
  \label{eq:derivative entropy two relative}
\end{align}
In the above equation, which is derived in \cite{supp}, $H(X||Y)=\tr[X(\log X-\log Y)]$ is the relative entropy between two densities $X$ and $Y$. Physical processes cannot increase relative entropy~\cite{Lindblad}. Thus, the positive term is non-increasing as $\trans$ increases. The magnitude of the negative term moves in the opposite direction, but the negative sign means that both terms and their sum do not increase as $\trans$ increases. For initially mixed states $\rho_1\neq |\psi\rangle\langle \psi|$, the derivative can still be expressed as a difference of relative entropies by invoking a purification of the initial state and the same monotonicity logic carries over. The full derivative formula for initially mixed states and further mathematical details are presented in \cite{supp}.

Our proof features the photon subtracted state ${\sigma_T=a\rho_T a^\dagger/\mathrm{Tr}(a^\dagger a\rho_T)}$, which finds applications in quantum information processing~\cite{Ourjoumtsevetal2006,Parigietal2007,Walschaers2021}. The essential logic of the proof is that loss cannot make it easier to decide whether or not a photon has been subtracted from a state. 

 \setcounter{thm}{\getrefnumber{thm:entr}}

\textit{Concavity of mixedness/convexity of purity}---The task of proving the original conjecture is now complete. A curious reader might ask, is the conjecture true if we use a different measure of entanglement?  
Let us now turn to what is often the first entanglement monotone one learns about, mixedness.

\begin{thm}
\label{thm:pur}
    The purity (mixedness) of a pure state subject to loss is a convex (concave) function of loss; $\forall |\psi\rangle, \trans \in [0, 1]$: $\partial ^2 \pur_\psi(\trans)/\partial \trans^2\geq 0$. Any pure state interfered with the vacuum at a beam splitter thus generates the greatest linear entanglement entropy when the beam splitter has transmission $\trans\!=\!1/2$.
\end{thm}

Our proof of this theorem rests on the fact that the purity of any state, and in fact the overlap between any two operators, can be measured using an operational procedure called the swap test~\cite{Bovinoetal2005,Islametal2015}. The swap test exploits the identity $\tr[F G] = \tr[(F \otimes G) S]$, where $F$ and $G$ are any trace-class operators and $S{=(-1)^{(a_1-a_2)^\dagger(a_1-a_2)/2}}$ is the swap operator. The utility of this formula is that it allows the overlap between two operators evolving in the Schr\"odinger picture to be expressed in terms of the evolution of $S$ in the Heisenberg picture. The Heisenberg evolution of $S$ under loss is detailed in \cite{supp} and leads to the identity
\begin{equation}\label{eq:purity positive polynomial lambda}
   \small{ \tr \left [\mathcal{E}_\trans[F] \mathcal{E}_\trans[G] \right ] \!= \!\tr \!\left [(F \otimes G) (\!1\!-\!2\trans)^{\frac{1}{2}(a_1-a_2)^\dagger (a_1-a_2)} \right].}\ 
\end{equation}
When $F = G = |\psi\rangle \langle \psi|$, the overlap equals the purity $\pur_\psi(\trans)$.
The operator {$(a_1-a_2)^\dagger(a_1-a_2)/2$} in Eq.~\eqref{eq:purity positive polynomial lambda} has non-negative integer eigenvalues and represents the number of photons in the dark port of a 50/50 beam splitter. As a result, $\pur_\psi(\trans)$ is a polynomial in ${1 - 2 \trans}$: $\pur_\psi(\trans) = \sum_{m \geq 0} p_m (1-2\trans)^m$. The coefficients $p_m$ represent the probability of measuring $m$ photons in the dark port of an interferometer in which $|\psi\rangle$ is interfered with a copy of itself on a 50/50 beam splitter. As a probability distribution, $p_m \geq 0$, thus $\pur_\psi(T)$ is monotonic and convex for $\trans \leq 1/2$. The $\trans \rightarrow 1 - \trans$ symmetry ensures that $\pur_\psi(\trans) = \pur_\psi(1-\trans)$. Thus convexity for $\trans \leq 1/2$ implies convexity for $\trans \geq 1/2$ as well.



\textit{Log-concavity of concurrence}---We continue to ask: does the optimality of balanced beam splitters hold for other entanglement monotones? One affirmative answer lies in the next theorem.

\begin{thm}
\label{thm:conc}
    The logarithm of the $G$-concurrence of a finite-dimensional pure state subject to loss 
    is a concave function of loss; 
    $\forall\,  |\{\psi_n\}_n|<\infty, \trans \in [0, 1]$: $\partial^2  \log\conc/\partial \trans^2\leq~0$. 
    Any pure state interfered with the vacuum at a beam splitter thus generates the greatest $G$-concurrence entanglement (entanglement capacity~\cite{Gour2005}) when the beam splitter has transmission $\trans\!=\!1/2$.
\end{thm}

The details of the proof appear in \cite{supp}, but the key is to notice that $\det[\rho_\trans]$ appearing in the definition of the  $G$-concurrence is the magnitude squared of the determinant of the Schmidt matrix. The Schmidt matrix is anti-diagonal, so the magnitude of its determinant is simply the product of the entries on the main anti-diagonal. These are completely determined by $\trans$, the maximum number of photons $N$ for which the initial state has non-zero amplitude, and the associated amplitude $|\psi_N|$. Specifically, 
\begin{equation}
\log\conc=\frac{N}{2}\log \left (\trans(1-\trans) \right ) + \log(|\psi_N|^2 (N+1)) + \frac{\log(C_N)}{N+1},
\label{eq:conc log convex}
\end{equation}
where $C_N = \Pi_{m=0}^N\binom{N}{m}$. This function is linear in $\log(\trans(1-\trans)$, which itself is concave in $\trans$. Thus $\log\conc$ is concave with a global maximum at $\trans=1/2$. By monotonicity of the logarithm, $\conc$ also has a global maximum at $\trans=1/2$.
This monotone only works for finite-dimensional states because determinants are not continuous when the dimension of a matrix changes~\cite{supp}.

This result shows that $\trans = 1/2$ maximizes $H_\alpha$ for infinitesimal $\alpha$. To first order in $\alpha$, $H_\alpha = H_0 + \alpha \log\conc + \mathcal{O}(\alpha^2)$. For $\trans \in (0, 1)$, $\rho_\trans$ is full rank and $H_0$ is independent of $\trans$. Thus, the first-order approximation of $H_\alpha$ is yet another entanglement monotone concave in $\trans$ and maximized at $\trans = 1/2$.

\textit{Mixed-state entanglement}---Pure states are excellent for idealizations and mathematical proofs, but the real world concerns mixed states. We thus extend our three above proofs to mixed-state entanglement measures.

For any pure-state entanglement monotone 
$E(\Psi)$, a mixed-state entanglement monotone can be defined by a convex-roof extension over all pure state decompositions~\cite{BengtssonZyczkowski2006,Horodeckietal2009}:
$\tilde{E}(\omega)=\inf_{\sum_i p_i |\Psi^i\rangle\langle\Psi^i| = \omega}\sum_i p_i E(\Psi^i)$. 
For example, the convex-roof extension of von Neumann entropy is known as the entanglement of formation~\cite{Bennettetal1996} and the extension of purity is called the generalized or I-concurrence~\cite{Rungtaetal2001}.

We argue that if $E(\Psi_T)$ is concave in $T$ for all initial pure states, then $\tilde{E}(\omega_T)$ is concave in $T$ for all initial states.  The mapping $\omega_T(\rho_1) = B(T) \rho_1 \otimes |0\rangle \langle 0| B(T)^\dagger$ is a one-to-one mapping from initial states to joint states after the beam splitter. Thus the infimum over joint state decompositions reduces to an infimum over input-state decompositions
\begin{equation}\label{eq:convex roof extension}
    \tilde{E}(\omega_T(\rho_1)) = \inf_{\sum_i p_i |\psi^i\rangle\langle\psi^i| = \rho_1} \sum_i p_i E \left (B(T) |\psi^i, 0\rangle \right ).
\end{equation}
A convex sum of concave functions and an infimum over concave functions are both concave themselves, thus $\tilde{E}(\omega_T(\rho_1))$ is concave in $T$. In fact, the same argument proves that $\tilde{E}$ inherits from $E$ any property preserved by convex sums and infimums, including symmetry about $\trans = 1/2$ and monotonicity. To successfully apply this argument, the property must hold for every initially pure state because the infimum scans over all pure state decompositions.


\textit{Suboptimality for other monotones}---It is tempting to extrapolate from our results and conclude that balanced beam splitters are optimal according to all entanglement monotones. Since we have shown balanced is best for R\'enyi entropies of order $\alpha \in \{0, 1, 2\}$ and infinitesimal $\alpha$, it is natural to wonder whether balanced is best for all R\'enyi orders, especially considering how cherished entropy inequalities are throughout information theory~\cite{Shannon1948,Lindblad,BialynickiBirulaMycielski1975,ColesEUR,Hertz_2019,Rioul2011}.

We conjecture that balanced is best when interpolating to $\alpha \in [0, 2]$. \sugg{Of particular note is $\alpha = 1/2$, which corresponds to the log-negativity of entanglement (and the ``robustness of entanglement'' \cite{VidalTarrach1999}) when the input state is pure: $\parallel|\Psi_T\rangle\langle\Psi_T|^{\top_2}\parallel_1=\tr[\rho_T^{1/2}]^2$~\cite{VidalWerner2002,Calabreseetal2012}. Negativity-based measures enjoy computational tractability and can serve as an accurate proxy for more operationally meaningful measures like the the number of singlets one requires to make a given entangled state with restricted (``positive partial transpose'') operations~\cite{Audenaertetal2003,Horodeckietal2009}. Its connection to R\'enyi entropy immediately shows it achieves a local optimum at $\trans = 1/2$ but, as with the other $\alpha \in (0, 1) \cup (1, 2)$, the global optimality at $\trans = 1/2$ remains conjectural.}

On the other hand, balanced is not always best when extrapolating to $\alpha > 2$. 
Take the Fock state $|\psi\rangle=|6\rangle$ subject to loss. We plot in Fig.~\ref{fig:Renyi} its R\'enyi entropy versus $\trans$ for various values of $\alpha$, from which it is clear that higher-order entropies need not be concave nor maximized by balanced beam splitters. In fact, as soon as $\alpha > 2$, there are states for which $H_\alpha$ is not concave in $T$. In particular, for every $\alpha > 2$, there is a $T > 0$ at which $\partial^2 H_\alpha(\mathcal{E}_\trans[|1\rangle \langle1|])/\partial \trans^2$ changes sign~\cite{HillionJohnson2017}. Despite the loss of concavity when $\alpha > 2$, there are higher orders for which the entropy of Fock states is still optimized at $\trans \!=\! 1/2$. The value of $\alpha$ for which entanglement is no longer optimized at $\trans\!=\!1/2$ tends to be close to the Fock-state number but it can be both above or below that number; see~\cite{supp} for a list of values. 

The suboptimality of balanced beam splitters for higher order R\'enyi entropies is not merely a mathematical curiosity. The limit $\alpha \rightarrow \infty$ yields the min-entropy, which is the ultimate figure of merit for classical and quantum random number generation~\cite{Nisan1996,Konig2009,Ma2016}, so there is an operationally useful entanglement monotone for which balanced is not always best. Higher order R\'enyi entropies also appear in the replica trick~\cite{Holzheyetal1994,Headrick2010} used in quantum gravity and statistical physics that finds a series of R\'enyi entropies for integer-valued $\alpha\geq 2$ and applies an analytic continuation to find the von Neumann entropy. Somehow the peculiarities of higher-order entropies~\cite{BenBassatRaviv1978,KimSanders2010,Kwonetal2020} conspire to make the von Neumann entropy concave in loss. This reinforces that a specialized calculation is required for each entanglement monotone.

\begin{figure}
    \centering\includegraphics[width=\columnwidth]{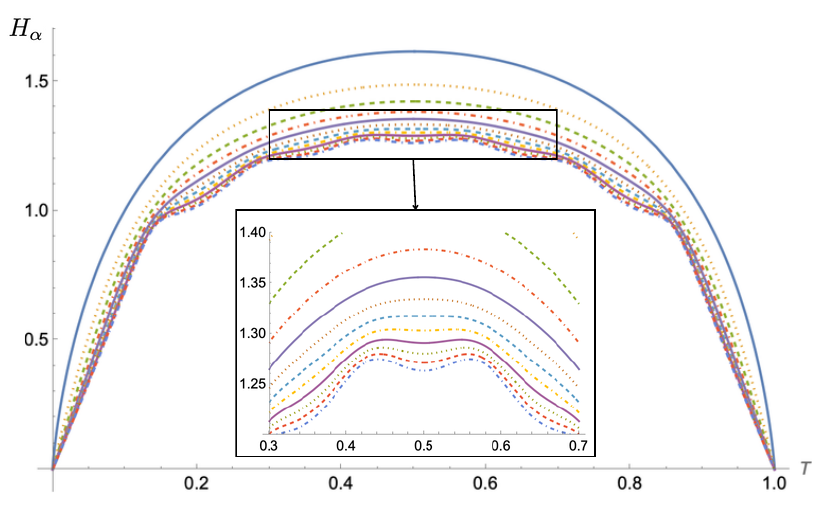}
    \caption{R\'enyi entropies from $\alpha=1$ (top, equal to von Neumann) to $\alpha=12$ (increasing downward) versus $\trans$ for a Fock state $|6\rangle$.}
    \label{fig:Renyi}
\end{figure}

\textit{Discussion}---Theorems \ref{thm:entr}, \ref{thm:pur}, and \ref{thm:conc} demonstrate three entanglement monotones for which balanced beam splitters generate the most entanglement between any state and the vacuum. This justifies the original definition of entanglement potential as the amount of entanglement generated by a state impinging on a balanced beam splitter~\cite{Asbothetal2005} as opposed to alternatives such as the entanglement generated by infinitesimally transmissive beam splitters~\cite{GoldbergHeshami2021} or averaged over beam splitter configurations~\cite{Goldbergetal2022sym}. Using the link between beam splitters and loss, we have shown that quantities such as entropy and mixedness of a pure state are concave with loss, no matter their dimensionality or Gaussianity.

Knowing properties of \textit{all} states at beam splitters is important to applications like teleportation-based schemes~\cite{NeergaardNielsenetal2013,Suetal2022} for quantum error correction and quantum communication~\cite{Suetal2022}. These schemes specifically rely on states’ interference with vacuum followed by photon number resolving detection. Similarly, schemes based on photon subtraction with beam splitters are key to inducing nonlocal correlations for entanglement distillation~\cite{Takahashietal2010}. Our proof paves the way for extending these schemes to arbitrary resource states by dramatically reducing the effort required to optimize over resource states and linear optics.

Consider finding the best way to convert single-mode nonclassicality into bipartite entanglement. \textit{A priori}, one must weigh all possible resource states with all possible linear optical networks. Thankfully, our results promise (for at least three popular entanglement monotones) that the best possible linear optical network is the humble balanced beam splitter~\cite{Asbothetal2005}. Furthermore, many entanglement monotones are convex over states~\cite{BengtssonZyczkowski2006,Horodeckietal2009}, ensuring the optimal resource state will be pure. This still holds even subject to linear constraints, such as requiring the resource state to have its average photon number not exceed an energy budget. 


The connection between beam splitters and loss provides \sugg{further} applications for our results~\cite{CompanionLong}. The results on purity prove a recent conjecture of ours that a witness of nonclassicality called the quadrature coherence scale~\cite{Debievre,Hertz}
stops certifying nonclassicality at 50\% loss~\cite{HGH}. In phase space, loss is represented as a Gaussian convolution and our results yield inequalities about these convolutions.  
Our formula for the overlap between two operators under loss proves that the Hilbert-Schmidt inner product between any two states increases monotonically with loss past 50\%.

Our results on von Neumann entropy\sugg{, in turn,} show that the mutual information between any state and the vacuum is concave \sugg{in beam splitter transmission.} Mutual information accounts for both classical and quantum correlations, whereas the monotones we have discussed until this point only reflect quantum entanglement. In statistical mechanics, there is a direct link between the stability of thermal equilibrium and the concavity of entropy~\cite{Lieb1999}. Quantum thermodynamics posits that the best description of a pure state after an unknown interaction with the environment is the one that maximizes the von Neumann entropy. The concavity of entropy in loss suggests there might be physical systems where chaotic linear optical transformations drive states towards a predictable and stable thermal equilibrium.


The theorems presented here are simple and, yet, the corollaries are numerous and far-reaching, so we direct the interested reader to our companion paper~\cite{CompanionLong}. Understanding loss is crucial for mitigating its effects on quantum optical technologies. Similar mathematics abound in two-mode physical systems, including aberrated optical cavities and coupled superfluids~\cite{Gutierrezetal2023} that could benefit from our results. 
\sugg{A complete classification of which entanglement measures are and are not optimized by balanced beam splitters would surely bring even more insight to all of these systems, but such a comprehensive picture remains elusive.}

\begin{acknowledgments}
    Discussions with Stephan De Bi\`evre and Nicol\'as Quesada are gratefully acknowledged.
    The NRC headquarters is located on the traditional unceded territory of the Algonquin Anishinaabe and Mohawk people. NLG acknowledges funding from the NSERC Discovery Grant and Alliance programs.
\end{acknowledgments}

\appendix

 \section{Proof of Theorem 2}
To prove the Theorem for concavity of von Neumann entropy and thus of entanglement entropy, as well as its optimum at $1/2$, we use the following Lemmas. They establish the symmetry about $\trans\!=\!1/2$ (Lemma~\ref{lem:sym}), then gradually massage the derivative of this entropy into the difference between two relative entropies (Lemmas~\ref{lem:entr der diff} and~\ref{lem:T vs 1-T}) so that we can use monotonicity of quantum relative entropies to prove our result. We then extend the concavity proof to the concavity with respect to loss of the von Neumann entropy of mixed input states, where the symmetry about $\trans\!=\! 1/2$ no longer holds and this no longer serves as a direct measure of entanglement.
\begin{lem}[Schmidt symmetry]\label{lem:sym}
The Schmidt matrix of a pure state subject to loss $M(T)$, with matrix elements $M_{m n}(T) = \langle m, n|\Psi_T\rangle$, obeys $M(1\!-\!T) = M(T)^\top$. 
\end{lem}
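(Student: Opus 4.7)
The plan is to prove $M(1-T) = M(T)^\top$ by a direct computation of the Schmidt matrix entries in the Fock basis. The key observation is that swapping $T\leftrightarrow 1-T$ corresponds to relabelling the two output ports of the beam splitter, while transposing $M$ also exchanges the roles of its row and column indices; since those indices label the same two ports, the two operations should coincide.

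Concretely, I would first compute $B(T)|p,0\rangle$ for a single-mode number state by writing $|p,0\rangle = (a^\dagger)^p|0,0\rangle/\sqrt{p!}$, then using the invariance of the two-mode vacuum $B(T)|0,0\rangle=|0,0\rangle$ together with the Heisenberg relation $B(T)a^\dagger B(T)^\dagger = \sqrt{T}\,a^\dagger + \sqrt{1-T}\,b^\dagger$ (the adjoint of the transformation stated in the definitions). The Binomial Theorem on $(\sqrt{T}a^\dagger+\sqrt{1-T}b^\dagger)^p$ then yields
\begin{equation}
    B(T)|p,0\rangle = \sum_{m=0}^{p}\sqrt{\binom{p}{m}T^m(1-T)^{p-m}}\,|m,p-m\rangle.
\end{equation}
Extending to $|\psi,0\rangle$ by linearity and reading off $M_{mn}(T)=\langle m,n|\Psi_T\rangle$ gives
\begin{equation}
    M_{mn}(T) = \psi_{m+n}\sqrt{\binom{m+n}{m}\,T^m(1-T)^n}.
\end{equation}

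Finally, I would substitute $T\mapsto 1-T$ in this closed form and compare with $M_{nm}(T)$; the symmetry $\binom{m+n}{m}=\binom{m+n}{n}$ of the binomial coefficient makes the two expressions identical, establishing the claim. There is no real obstacle: the lemma reduces to a short direct calculation once the Fock expansion of $B(T)|\psi,0\rangle$ is in hand, and the only care required is with index conventions (which mode plays the role of the first vs.\ second label). The downstream payoff that motivates stating this symmetry as a lemma is that singular values are invariant under transposition, so the Schmidt coefficients of $|\Psi_T\rangle$, and therefore every entanglement monotone built from them, are symmetric under $T\leftrightarrow 1-T$, allowing subsequent analysis of entanglement-generation to be restricted to $T\in[1/2,1]$.
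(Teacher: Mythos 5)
Your proposal is correct: the closed form $M_{mn}(T)=\psi_{m+n}\sqrt{\binom{m+n}{m}T^m(1-T)^n}$ follows exactly as you describe from the Fock expansion of $B(T)\lvert\psi,0\rangle$ already displayed in the paper's definitions, and the binomial symmetry $\binom{m+n}{m}=\binom{m+n}{n}$ then gives $M_{mn}(1-T)=M_{nm}(T)$ entrywise. The paper reaches the same conclusion by a shorter, operator-level route: it introduces the SWAP operator $S\lvert m,n\rangle=\lvert n,m\rangle$, notes that $S\lvert\Psi_T\rangle=\lvert\Psi_{1-T}\rangle$ (physically, exchanging the two output ports of the beam splitter), and reads off $\langle m,n\vert\Psi_T\rangle=\langle n,m\vert\Psi_{1-T}\rangle$ without ever writing the matrix elements. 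The physical insight is identical --- your opening sentence is precisely the paper's argument --- but the executions differ in a way worth noting: your explicit computation makes the entrywise structure of $M(T)$ available for reuse (it is, in effect, what Theorem~\ref{thm:conc} needs when it identifies $M$ as anti-triangular and computes its determinant), whereas the paper's SWAP argument is more economical and extends immediately to situations where an explicit Fock expansion would be cumbersome. Your remark about the downstream payoff (invariance of singular values under transposition, hence symmetry of all Schmidt-based monotones about $T=1/2$) matches the paper's use of the lemma exactly.
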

\begin{proof}
    For the Hermitian swap operator defined by $S |m, n\rangle = |n, m\rangle$, we have $S |\Psi_\trans\rangle=|\Psi_{1-\trans}\rangle$, which is like physically swapping the two output ports of the beam splitter. Therefore $\langle m, n| \Psi_{\trans}\rangle = \langle n, m|\Psi_{1-\trans}\rangle$.
\end{proof}

Transposing a matrix does not change its singular values, so a direct consequence of Lemma \ref{lem:sym} is that $|\Psi_T\rangle$ and $|\Psi_{1\!-\!T}\rangle$ have the same Schmidt coefficients.

The following two Lemmas reveal the concavity of entropy by together showing that the \textit{first} derivative of entropy is monotonically decreasing such that its \textit{second} derivative is never positive.
\begin{lem}
    [Derivative of entropy of entanglement]\label{lem:entr der diff} $\partial \entr_\psi/\partial \trans=\langle a^\dagger a\rangle_\psi\left(H(\tau_\trans||\rho_\trans)-H(\sigma_\trans||\rho_\trans)\right)$ for relative entropy $H(X||Y)=\tr[X(\log X-\log Y)]$ and normalized states $\sigma_\trans\propto a\rho_\trans a^\dagger$ and $\tau_\trans\propto \sqrt{\rho_\trans}a^\dagger a\sqrt{\rho_\trans} $.
\end{lem}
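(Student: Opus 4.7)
The plan is to express the loss channel via its Lindblad generator, differentiate $\entr_\psi(\trans)$ directly, and then massage the result until it assumes the claimed relative-entropy form. Concretely, the semigroup property $\mathcal{E}_{\trans_1}\circ\mathcal{E}_{\trans_2}=\mathcal{E}_{\trans_1\trans_2}$ means that, setting $\trans=\eu^{-t}$, the reduced state satisfies the standard pure-loss master equation $\partial_t\rho_\trans = a\rho_\trans a^\dagger-\tfrac12\{a^\dagger a,\rho_\trans\}$. Converting back to $\trans$ gives $\partial_\trans\rho_\trans = -\tfrac{1}{\trans}\bigl(a\rho_\trans a^\dagger-\tfrac12\{a^\dagger a,\rho_\trans\}\bigr)$, which is the only nontrivial input I need; all subsequent steps are algebraic.

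The next step is to differentiate the entropy. Because $\tr\rho_\trans$ is constant, $\partial_\trans\entr_\psi = -\tr[\partial_\trans\rho_\trans\,\log\rho_\trans]$. Substituting the generator and using the cyclicity of the trace to symmetrize the anticommutator term yields
\begin{equation}
\partial_\trans\entr_\psi=\tfrac{1}{\trans}\bigl(\tr[a\rho_\trans a^\dagger\,\log\rho_\trans]-\tr[a^\dagger a\,\rho_\trans\,\log\rho_\trans]\bigr).
\end{equation}
To expose $\tau_\trans$ I rewrite $\rho_\trans\log\rho_\trans=\sqrt{\rho_\trans}\,(\log\rho_\trans)\sqrt{\rho_\trans}$ (since $\sqrt{\rho_\trans}$ and $\log\rho_\trans$ commute) and then cycle the trace, giving $\tr[a^\dagger a\,\rho_\trans\log\rho_\trans]=\tr[\sqrt{\rho_\trans}\,a^\dagger a\sqrt{\rho_\trans}\,\log\rho_\trans]$. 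Both unnormalized operators $a\rho_\trans a^\dagger$ and $\sqrt{\rho_\trans}\,a^\dagger a\sqrt{\rho_\trans}$ have trace $\tr[a^\dagger a\,\rho_\trans]=\trans\langle a^\dagger a\rangle_\psi$ (the factor $\trans$ being how loss scales the mean photon number), so their normalizations into $\sigma_\trans$ and $\tau_\trans$ coincide and cancel the explicit $1/\trans$, producing
\begin{equation}
\partial_\trans\entr_\psi=\langle a^\dagger a\rangle_\psi\bigl(\tr[\sigma_\trans\log\rho_\trans]-\tr[\tau_\trans\log\rho_\trans]\bigr).
\end{equation}

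The remaining, and most interesting, step is to promote this expression into differences of relative entropies. Writing $\tr[X\log\rho_\trans]=-\entr(X)-H(X\|\rho_\trans)$ for $X\in\{\sigma_\trans,\tau_\trans\}$, the claim follows provided $\entr(\sigma_\trans)=\entr(\tau_\trans)$. This is the key observation: with $A\equiv a\sqrt{\rho_\trans}$, one has $AA^\dagger=\trans\langle a^\dagger a\rangle_\psi\,\sigma_\trans$ and $A^\dagger A=\trans\langle a^\dagger a\rangle_\psi\,\tau_\trans$. Since $AA^\dagger$ and $A^\dagger A$ share their nonzero spectrum (polar/singular-value decomposition), $\sigma_\trans$ and $\tau_\trans$ are isospectral and hence have equal von Neumann entropies. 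Substituting this cancellation back in produces exactly $\partial_\trans\entr_\psi=\langle a^\dagger a\rangle_\psi\bigl(H(\tau_\trans\|\rho_\trans)-H(\sigma_\trans\|\rho_\trans)\bigr)$.

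The only nontrivial move is spotting the $A^\dagger A$ versus $AA^\dagger$ identification in the final step; without it one is stuck with an asymmetric combination of relative entropy and von Neumann entropy terms that does not obviously have a definite sign. Everything else is bookkeeping around the Lindblad generator, the commutativity of functions of $\rho_\trans$, and the normalization factor $\trans\langle a^\dagger a\rangle_\psi$.
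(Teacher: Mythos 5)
Your proposal is correct and follows essentially the same route as the paper: differentiate the entropy via the pure-loss master equation, use cyclicity and the commutation of functions of $\rho_\trans$ to identify the normalized states $\sigma_\trans$ and $\tau_\trans$ with common normalization $\trans\langle a^\dagger a\rangle_\psi$, and then exploit the isospectrality of $AA^\dagger$ and $A^\dagger A$ for $A=a\sqrt{\rho_\trans}$ to trade the entropy terms and arrive at the difference of relative entropies. The paper performs the same final step by adding and subtracting $\langle a^\dagger a\rangle_\psi\tr[\sigma_\trans\log\sigma_\trans]$ with $W=a\sqrt{\rho_\trans}/\sqrt{\tr[a^\dagger a\rho_\trans]}$, which is your $A$ up to normalization.
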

\begin{proof}
    Derivatives of lossy states are known from master equations for damped harmonic oscillators (sometimes with other loss parametrizations like $\eu^{-\gamma t}=\cos^2\theta=\eta=\trans$)~\cite{MandelWolf1995,ScullyZubairy1997,GardinerZoller2004,GerryKnight2005,WeidlichHaake1965,KarrleinGrabert1997}:
\begin{equation}
    \frac{\partial \rho}{\partial T}=\frac{1}{2T}\left( a^\dagger a\rho+\rho a^\dagger a-2a\rho a^\dagger\right).
\end{equation}
    The derivative of the von Neumann entropy of $\rho_T$ can thus be computed as
    \begin{equation}
        \begin{aligned}
            \frac{\partial \entr_\psi(\trans)}{\partial \trans}&=-\tr\left[\frac{\partial \rho_\trans}{\partial \trans}\log \rho_\trans\right]{+0}\\
    &=\frac{1}{2\trans}\tr[(2a\rho_\trans a^\dagger-a^\dagger a\rho_\trans-\rho_\trans a^\dagger a)\log\rho_\trans]\\
    =\langle a^\dagger &a \rangle_\psi \tr\left[\left(\frac{a\rho_\trans a^\dagger}{\tr[a^\dagger a\rho_\trans]}-\frac{\sqrt{\rho_\trans} a^\dagger a\sqrt{\rho_\trans}}{\tr[a^\dagger a\rho_\trans]}\right)\log\rho_\trans\right], 
        \end{aligned}
    \end{equation} where
    the third line uses the commutation of functions of $\rho_\trans$ and that $\tr[a^\dagger a\rho_\trans]=\trans\langle\psi|a^\dagger a|\psi\rangle$ because the average number of photons decreases by a factor of $\trans$ as expected from classical attenuation. Next, noticing that $\sigma_\trans=WW^\dagger$ and $\tau_\trans=W^\dagger W$ have the same eigenvalues such that $\tr[\sigma_\trans\log\sigma_\trans]=\tr[\tau_\trans\log\tau_\trans]$ and selecting $W=a\sqrt{\rho_\trans}/\sqrt{\tr[a^\dagger a \rho_\trans]}$, we add and subtract $\langle a^\dagger a\rangle_\psi\tr[\sigma_\trans\log\sigma_\trans]$ to complete the proof.
\end{proof}
We have the first derivative in terms of a difference between two relative entropies, but it is not immediately clear what their monotonicity properties are because we do not yet know whether $\sigma_\trans=\mathcal{E}_\trans(\sigma_1)$ and likewise for $\tau_\trans$; in fact, this only holds for the former. We thus transform the expression by further specifying the relationship between $\sigma_\trans$ and $\tau_\trans$.
\begin{lem}[Symmetry of relative entropy about loss]\label{lem:T vs 1-T}
    $H(\tau_\trans||\rho_\trans)=H(\sigma_{1-\trans}||\rho_{1-\trans})$.
\end{lem}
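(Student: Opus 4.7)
\emph{Plan.} The key ingredient I will need beyond Lemma~\ref{lem:sym} is the operator identity
\begin{equation}
(a\otimes\mathbb{I})|\Psi_T\rangle = \sqrt{T/(1-T)}\,(\mathbb{I}\otimes b)|\Psi_T\rangle,\nonumber
\end{equation}
which I obtain by writing $aB(T)=B(T)[B(T)^\dagger aB(T)]$ and likewise for $b$; the Heisenberg conjugations are linear in $a,b$, and the $b$ part annihilates $|\psi,0\rangle$, so that $aB(T)|\psi,0\rangle=\sqrt{T}\,B(T)(a|\psi\rangle\otimes|0\rangle)$ and $bB(T)|\psi,0\rangle=\sqrt{1-T}\,B(T)(a|\psi\rangle\otimes|0\rangle)$ are proportional vectors. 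In parallel, Lemma~\ref{lem:sym} lets me identify $\rho_{1-T}$ on $H_a$ with $\rho_T^{(b)}:=\tr_a|\Psi_T\rangle\langle\Psi_T|$ on $H_b$ in the Fock basis, so that $\sigma_{1-T}$ corresponds to $\tilde\sigma_T:=b\rho_T^{(b)}b^\dagger/N_{1-T}$ and the target reduces to $H(\tau_T||\rho_T)$ on $H_a$ equalling $H(\tilde\sigma_T||\rho_T^{(b)})$ on $H_b$.

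Split the claim into an entropy piece $S(\tau_T)=S(\sigma_{1-T})$ and a cross term $\tr[\tau_T\log\rho_T]=\tr[\sigma_{1-T}\log\rho_{1-T}]$. The entropy piece follows directly: $\sigma_T=a\rho_Ta^\dagger/N_T$ and $\tilde\sigma_T$ are the two marginals of the single pure state $|\phi_T\rangle\propto B(T)(a|\psi\rangle\otimes|0\rangle)$ on $H_a\otimes H_b$ and hence isospectral, while $S(\tau_T)=S(\sigma_T)$ by the polar-decomposition remark of Lemma~\ref{lem:entr der diff}. For the cross term I rewrite $N_T\tr[\tau_T\log\rho_T]=\langle\Psi_T|(a^\dagger a\log\rho_T)\otimes\mathbb{I}|\Psi_T\rangle$ using $[\sqrt{\rho_T},\log\rho_T]=0$ and trace cyclicity, then apply two tools: the Schmidt ``transpose trick'' $(f(\rho_T)\otimes\mathbb{I})|\Psi_T\rangle=(\mathbb{I}\otimes f(\rho_T^{(b)}))|\Psi_T\rangle$ with $f=\log$ to move $\log\rho_T$ onto mode $b$, and the operator identity above, applied once on the ket and once on the bra, to convert $a\otimes\mathbb{I}$ and $a^\dagger\otimes\mathbb{I}$ into $\sqrt{T/(1-T)}\,\mathbb{I}\otimes b$ and its adjoint. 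The factor $T/(1-T)$ combines with $N_T$ to give $N_{1-T}^{-1}$, leaving $\tr_b[b\rho_T^{(b)}b^\dagger\log\rho_T^{(b)}]/N_{1-T}=\tr[\sigma_{1-T}\log\rho_{1-T}]$.

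The main obstacle is the structural mismatch between $\tau_T\propto\sqrt{\rho_T}\,a^\dagger a\,\sqrt{\rho_T}$, a functional-calculus sandwich on mode $a$, and $\sigma_{1-T}\propto a\rho_{1-T}a^\dagger$, a bare $a$-mode conjugation, which have no evident direct relation. The proportionality $a|\Psi_T\rangle\propto b|\Psi_T\rangle$ is the essential bridge: it lets any insertion of $a^\dagger a$ on one output of the beam splitter be rewritten as a $b\cdot b^\dagger$ sandwich on the other, after which Lemma~\ref{lem:sym} returns everything to mode $a$ at transmission $1-T$. A sanity check on $|\psi\rangle=|1\rangle$, where both sides reduce to $-\log T$, supports the scheme.
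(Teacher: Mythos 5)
Your proof is correct, and it takes a genuinely different route from the paper's. The paper works at the level of single-mode Kraus/Schmidt matrices: it writes $\rho_\trans=XX^\dagger$ for $X=\sum_n K_n\ket{\psi}\bra{n}$, uses $\sqrt{XX^\dagger}\log(XX^\dagger)\sqrt{XX^\dagger}=X\log(X^\dagger X)X^\dagger$ together with the commutation relation $aK_n=\sqrt{\trans}K_n a$ to convert the \emph{entire} relative entropy $H(\tau_\trans||\rho_\trans)$ into $H(Y^\dagger Y||X^\dagger X)$ for an auxiliary matrix $Y\propto\sum_n K_n a\ket{\psi}\bra{n}$, and only then invokes Lemma~\ref{lem:sym} to transpose both matrices at once. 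You instead stay at the level of the two-mode purification: your key identity $(a\otimes\openone)\ket{\Psi_\trans}=\sqrt{\trans/(1-\trans)}\,(\openone\otimes b)\ket{\Psi_\trans}$ is the purification-level counterpart of $aK_n=\sqrt{\trans}K_na$ and has a transparent physical reading (subtracting a photon from either output port of the beam splitter yields the same two-mode state up to normalization), and you split the relative entropy into an entropy piece, handled by isospectrality of the two marginals of a pure state, and a cross term, handled by the mirror identity $(f(\rho_\trans)\otimes\openone)\ket{\Psi_\trans}=(\openone\otimes f(\rho^{(b)}_\trans))\ket{\Psi_\trans}$ plus the mode-operator proportionality; Lemma~\ref{lem:sym} then only enters to identify $\rho^{(b)}_\trans$ with $\rho_{1-\trans}$. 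The bookkeeping checks out, including the normalization $\trans/(1-\trans)$ converting $N_\trans^{-1}$ into $N_{1-\trans}^{-1}$. What your version buys is physical transparency and the avoidance of the auxiliary Schmidt matrix $Y$; what the paper's version buys is compactness (one transposition step handles the whole relative entropy) and reusable machinery ($X$, $Y$, and $aK_n=\sqrt{\trans}K_na$ reappear in Theorem~\ref{thm:entr} and in the mixed-state extension in the supplement). One small point worth stating explicitly if you write this up: the mirror identity with $f=\log$ is only applied on the support of $\rho_\trans$, which is harmless here because $\tau_\trans$ is supported there, but deserves a sentence.
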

\begin{proof}
    First write $X=\sum_n K_n |\psi\rangle\langle n|$ for Kraus operators~\cite{Goldberg2024} 
    \eq{K_n(\trans)=\langle n|B(\trans)|0\rangle_b=\sqrt{\trans}^{a^\dagger a}(\sqrt{1-\trans}a)^n/\sqrt{n!}}such that $\rho_\trans=XX^\dagger$. Our notation occasionally suppresses $\trans$ dependence for brevity. 
    
        Then, note that any operator $X$ satisfies \eq{\sqrt{X X^{\dagger}} \log \left(X X^{\dagger}\right) \sqrt{X X^{\dagger}}=X \log \left(X^{\dagger} X\right) X^{\dagger}.}
        Indeed, using the polar decomposition any square matrix can be written as $X=PU$ where $P$ is positive semi-definite  and $U$ is unitary. 
        Then,
    $\sqrt{X X^{\dagger}} \log \left(X X^{\dagger}\right) \sqrt{X X^{\dagger}}=P\log(P^2)P$ and 
    $X \log \left(X^{\dagger} X\right) X^{\dagger}=PU\log(U^\dagger P^2U)U^\dagger P=PUU^\dagger\log(P^2)U U^\dagger P=P\log(P^2)P$. The logarithm is always defined on a positive, invertible matrix, even if $X$ is unbounded (which is not the case here), and we always have $\mathrm{Tr}(P^2)=1$ here.
     
     Thus, using the definitions of $\tau_\trans$ and $W$ introduced in the previous lemma, we have
     \begin{equation}
        \tr [\tau_\trans \log \rho_\trans] =\tr \left[\frac{X^{\dag}a^{\dag} a X }{\tr \left[a^{\dag} a X X^\dag \right]}\log \left(X^{\dag} X\right)\right]
    \end{equation} and $H_1[\tau_\trans]\!=\!H_1[\frac{X^{\dag}a^{\dag} a X }{\tr \left[a^{\dag} a X X^\dag \right]}]$. 
    
 Then, using the Baker–Campbell–Hausdorff identity and $[a^\dag a,a]=-a$ we prove that $aK_{n}=\sqrt{T}K_{n} a$ as follows\footnote{Note that this can also be proven by looking at matrix elements in the number basis.}
\begin{equation}
    \begin{aligned}
\sqrt{T}^{-a^\dag a} a \sqrt{T}^{a^\dag a} & =e^{-a^\dag a \ln \sqrt{T}} a e^{a^\dag a \ln \sqrt{T}}\\
& =a+[-a^\dag a \ln \sqrt{T}, a]\nonumber\\
&\ +\frac{1}{2!}[-a^\dag a \ln \sqrt{T},[-a^\dag a \ln \sqrt{T}, a]]+\ldots \\
&=a e^{\ln \sqrt{T}}=a \sqrt{T}\\
\Rightarrow a K_{n}&  =\sqrt{T} K_{n}(T) a .
\end{aligned}
\end{equation}
With this identity and $\sum_n K_n^\dagger K_n=\openone$, we find
    \begin{equation}
        \frac{X^\dag a^\dag a X}{\tr(X^\dag a^\dag a X)}=
         \frac{T\sum_{m,n}\ketbra{m}{\psi}a^\dag K_m^\dag K_n a\ketbra{\psi}{n}}{T\sum_n  \langle \psi|a^\dag K_n^\dag K_n a|\psi \rangle}
         =Y^\dagger Y
    \end{equation}
    for $Y=\sum_n K_n a|\psi\rangle\langle n|/\sqrt{\langle a^\dagger a\rangle_\psi}$. This means that $H(\tau_\trans||\rho_\trans)=H(Y^\dagger Y||X^\dagger X)$. 
    
    Next, note that $X$ and $Y$ are equivalent to the Schmidt matrix representations of $B(T) |\psi, 0\rangle$ and $B(T) a |\psi, 0\rangle/\sqrt{\langle a^\dagger a  \rangle_\psi}$, that is:
    \begin{equation}
    \begin{aligned}
        \langle m|X|n\rangle&=\langle mn|B(T)|\psi,0\rangle ,\\ \langle m|Y|n\rangle&=\frac{\langle mn|B(T)a|\psi,0\rangle}{\sqrt{\langle a^\dagger a\rangle_\psi}}.
           \end{aligned}
    \end{equation}
    Therefore, Lemma \ref{lem:sym} reveals that we can transpose the matrix if we send $\trans\to1-\trans$
    \begin{equation}
        \begin{aligned}
            H(\tau_\trans||\rho_\trans)&=H(Y_{1-T}^*Y_{1-T}^\top||X_{1-T}^*X_{1-T}^\top)\\
    &\!=\!H(Y_{1-T}^{}Y_{1-T}^\dagger||X_{1-T}^{}X_{1-T}^\dagger). 
        \end{aligned}
    \end{equation} The last line follows because entropy is real; transposition and complex conjugation are always taken with respect to the Fock basis throughout this work.
Finally, notice $X_{1-T}^{}X_{1-T}^\dagger\!=\!\rho_{1\!-\!T}$ and $Y_{1-T}^{}Y_{1-T}^\dagger\!=\!\sigma_{1\!-\!T}$.
\end{proof}

\setcounter{thm}{1}

We are now ready to prove the theorem in the main text:
\begin{thm}[Concavity of entropy]\label{thm:entr}
   The von Neumann entropy of any state subject to loss is a concave function of loss; ${\forall \rho_1 \geq0}, \ {\trans\in [0, 1]:}\  \partial ^2 H_1(\mathcal{E}_\trans[\rho_1])/\partial \trans^2\leq 0$.
    Consequently, any pure state interfered with the vacuum at a beam splitter generates the greatest entanglement entropy when the beam splitter has transmission $\trans=1/2$.
\end{thm}

\setcounter{thm}{\getrefnumber{lem:T vs 1-T}}

\begin{proof}
    By Lemmas~\ref{lem:entr der diff} and~\ref{lem:T vs 1-T}, $$\partial \entr_\psi(T)/\partial \trans=\langle a^\dagger a\rangle_\psi\left(H(\sigma_{1-\trans}||\rho_{1-\trans})-H(\sigma_{\trans}||\rho_{\trans})\right).$$ We will show that $H(\sigma_{\trans}||\rho_{\trans})$ increases monotonically with $\trans$; by symmetry, $H(\sigma_{1-\trans}||\rho_{1-\trans})$ must decrease with $\trans$ such that $\partial \entr_\psi/\partial \trans$ is nonincreasing with $\trans$.

    This all follows because $\sigma_\trans$ is the result of the loss channel acting on $\sigma_1$ that enacts $\rho_\trans=\mathcal{E}_\trans[\rho_1]$ and such a channel $\mathcal{E}_\trans$ is completely positive and trace preserving. Since loss is multiplicative, for $\trans\leq \trans^\prime$, $H(\sigma_{\trans}||\rho_{\trans})=H(\mathcal{E}_{\trans/\trans^\prime}[\sigma_{\trans^\prime}]||\mathcal{E}_{\trans/\trans^\prime}[\rho_{\trans^\prime}]|)\leq H(\sigma_{\trans^\prime}||\rho_{\trans^\prime})$, where the inequality holds by the monotonicity of relative entropy~\cite{Lindblad}. The relevant assertion is proven by
    \begin{equation}
        \begin{aligned}
            \sigma_\trans & =\frac{a \mathcal{E}_\trans[\rho_1] a^{\dag}}{\trans\ \tr \left(a^{\dag} \rho_1 a \right)}=\sum_n\frac{a  K_n\rho_1 K_n^\dagger a^{\dag}}{\trans\ \tr \left(a^{\dag} \rho_1 a \right)} \\
& =\sum_n K_n\frac{   a\rho_1 a^{\dag} }{\ \tr \left(a^{\dag} \rho_1 a \right)} K_n^\dagger=\mathcal{E}_\trans[\sigma_1] ,
        \end{aligned}
    \end{equation}again using $aK_n=\sqrt{\trans}K_n a$.\footnote{We note in passing that states of the form of $\sigma_\trans$ have gained prominence as ``photon-subtracted'' versions of $\rho_\trans$~\cite{Ourjoumtsevetal2006,Parigietal2007,Walschaers2021}.} Thus, $\partial ^2 \entr_\psi(\trans)/\partial \trans^2\leq 0\, \forall \trans\in(0,1)\, \forall \psi$. Concavity together with symmetry prove that the maximal entanglement is always generated at $\trans\!=\!1/2$ and that more entanglement is always generated the closer the parameter $\trans$ is to $1/2$. 
\end{proof}

As mentioned, the concavity of the entropy of states subject to loss is actually a general property of all input states. Our previous restriction to pure states was because it is for those states that this entropy corresponded to an entanglement measure. For mixed states, the appropriate entanglement measure is found in a different way that is detailed at the end of the main text. However, we still provide the proof of concavity on von Neumann entropy for mixed states $\rho_1$ as an independent result.





\section{ Concavity of von Neumann entropy for mixed states $\rho_1$.}

We extend the pure-state proof by writing
\begin{equation}
    \tilde{X}_\trans= \sum_i\sqrt{p_i} \sum_n K_n(\trans)|\psi^i\rangle\langle n|\otimes |0\rangle\langle \psi^{i\ast}|.
\end{equation} Here we have added an extra degree of freedom to take care of the mixedness of the initial, lossless state $\rho_1=\sum_i p_i |\psi^i\rangle\langle \psi^i|$, where the coefficients are valid probabilities $p_i>0$ and the eigenstates $\{|\psi^i\rangle\}$ are orthonormal without loss of generality, and we employ the orthonormal states $\{|\psi^{i\ast}\rangle\}$ whose coefficients are the complex conjugates of $\{|\psi^i\rangle\}$'s in the Fock basis. Considering the state that purifies $\rho_1$ to be ${|\Psi\rangle=\sum_i \sqrt{p_i}|\psi^i,\psi^i\rangle}$, we can write the above as ${\rho_1=\tr_b[|\Psi\rangle\langle \Psi|]}$ and \begin{equation}
    \begin{aligned}
        \tilde{X}_\trans^\top&
    =\sum_i\sqrt{p_i} \sum_n (K_n(\trans)|\psi^i\rangle\langle n|)^\top\otimes |\psi^i\rangle\langle 0|\\
    &=\sum_i\sqrt{p_i} \sum_n K_n(1-\trans)\otimes \openone|\psi^i,\psi^{i}\rangle\langle n, 0|\\
    &=\sum_n K_n(1-\trans)\otimes \openone|\Psi\rangle\langle n, 0|
    \end{aligned}
\end{equation} using the transpose of the Kraus operators proven for pure states. In this case, just like for pure states, \begin{equation}
    \tilde{X}_\trans\tilde{X}_\trans^\dagger=\mathcal{E}_\trans \otimes \openone [\rho_1\otimes|0\rangle\langle 0|]=\rho_\trans\otimes |0\rangle\langle 0|
\end{equation}from orthonormality of $\{|\psi^{i\ast}\rangle\}$ in the second degree of freedom and
\begin{equation}
    \begin{aligned}
        \left(\tilde{X}_\trans^\dagger\tilde{X}_\trans\right)^\ast
        &=\tilde{X}_\trans^\top\tilde{X}_\trans^{\top \dagger}\\
        &=
        \sum_n K_n(1-\trans)\otimes \openone|\Psi\rangle\langle \Psi|K_n(1-\trans)^\dagger\otimes \openone\\
        &=\mathcal{E}_{1-\trans} \otimes \openone [|\Psi\rangle\langle \Psi|].
    \end{aligned}
    \label{eq:asterisk X}
\end{equation}
Next, we adapt our pure-state proof to again use $\sqrt{X X^{\dagger}} \log \left(X X^{\dagger}\right) \sqrt{X X^{\dagger}}=X \log \left(X^{\dagger} X\right) X^{\dagger}$:
\begin{equation}
    \begin{aligned}
        \tr[a^\dagger a& \rho_\trans \log\rho_\trans]
        =\tr[\sqrt{\rho_\trans}a^\dagger a \sqrt{\rho_\trans} \log\rho_\trans]\\
        &=\tr[\sqrt{\rho_\trans \otimes |0\rangle\langle 0| }(a^\dagger a \otimes \openone)\sqrt{\rho_\trans \otimes |0\rangle\langle 0| }\\
        &\qquad\times\log(\rho_\trans\otimes |0\rangle\langle 0|)]\\
        &=\tr[\tilde{X}_\trans^\dagger (a^\dagger a\otimes \openone) \tilde{X}_\trans\log(\tilde{X}_\trans^\dagger\tilde{X}_\trans)].
    \end{aligned}
\end{equation}
Similarly defining
\begin{equation}
    \begin{aligned}
        \Tilde{Y}_{\trans}&=\frac{(a\otimes \openone)\tilde{X}_\trans}{\sqrt{\tr[\tilde{X}^\dagger (a\otimes \openone)^\dagger(a\otimes \openone)\tilde{X}_\trans]}}\\
        &=\frac{(a\otimes \openone)\tilde{X}_\trans}{\sqrt{T\tr[a^\dagger a \rho_1]}}
        \\
        &=
        \sum_i\sqrt{p_i} \sum_n K_n(\trans)\frac{a |\psi^i\rangle}{\sqrt{\langle \Psi| a^\dagger a\otimes \openone |\Psi\rangle}}
        \langle n|\otimes |0\rangle\langle \psi^{i\ast}|,
    \end{aligned}
\end{equation} we find
\begin{equation}
    \begin{aligned}
        \tilde{Y}_{\trans}^\top=\sum_n \frac{K_n(1-\trans)a\otimes \openone |\Psi\rangle}{\sqrt{\langle \Psi| a^\dagger a\otimes \openone |\Psi\rangle}}
        \langle n,0|
    \end{aligned}
\end{equation} using the transposition identity
and thus
\begin{equation}
    \begin{aligned}
        \tr[a^\dagger a& \rho_\trans \log\rho_\trans]=\tr[a^\dagger a \rho_\trans \log\rho_\trans]^\ast
        \\
        &=\tr[a^\dagger a \rho_1]\tr[\Tilde{Y}_{\trans}^\dagger\Tilde{Y}_{\trans}\log (\tilde{X}_\trans^\dagger \tilde{X}_\trans]^\ast\\
        &=\tr[a^\dagger a \rho_1]\tr[\Tilde{\sigma}_{1-\trans}\log \Tilde{\rho}_{1-\trans}]
    \end{aligned}
\end{equation} for analogously defined $\Tilde{\rho}_{\trans}=\mathcal{E}_\trans\otimes\openone [|\Psi\rangle\langle \Psi|]$ and ${\Tilde{\sigma}_{\trans}=\mathcal{E}_\trans\otimes\openone [(a\otimes\openone)|\Psi\rangle\langle \Psi|(a^\dagger\otimes\openone)/\langle \Psi| a^\dagger a\otimes \openone |\Psi\rangle]}$. We thus have
\begin{equation}
    \frac{\partial H_1[\rho_\trans]}{\partial \trans}=\langle a^\dagger a\rangle_{\rho_1}\left(H(\Tilde{\sigma}_{1-\trans}||\Tilde{\rho}_{1-\trans})-H({\sigma}_{\trans}||{\rho}_{\trans})\right)
\end{equation}
and monotonicity of relative entropy is again enough to establish that this is positive monotonically increasing with $\trans$ such that the entropy is always concave. Notice that only one of the relative entropies above involves states $\Tilde{\rho}_\trans$ and $\Tilde{\sigma}_\trans$ in the expanded Hilbert space. This reflects the fact that the entropy of some initially mixed state is not symmetric about $\trans = 1/2$.


    \section{Details leading to Theorem 3}

We now provide proof of a lemma that leads to Theorem 3 in the main text. This establishes the overlap between two operators, each subject to loss, as a polynomial in $1-2\trans$ with positive coefficients that depend on the two operators. The proof relies on an experimental method for measuring purity, the fact that uniform losses commute with linear optics, and that known loss prior to photon-number measurement can be incorporated into the statistics of the measurement using post processing.
\begin{lem}
    [Positive polynomial expansion of Hilbert-Schmidt norm]\label{lem:HS positive} The overlap between any two positive operators {$\op_1$} and {$\opp_1
    $} subject to loss, as measured by the Hilbert-Schmidt inner product {$O_\trans=\tr[\op_\trans \opp_\trans]$}, is a polynomial in $\cent\equiv (1-2\trans)$ whose expansion coefficients are all nonnegative.
\end{lem}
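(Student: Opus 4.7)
The plan is to recast $O_\trans$ as the expectation value of the parity operator on a single mode after loss, whose photon-number distribution then supplies the nonnegative coefficients of the expansion in $\cent$.

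First I would write $O_\trans = \tr_{a_1 a_2}[(\rho_\trans \otimes \sigma_\trans) S_{12}]$ using the swap representation of the Hilbert--Schmidt inner product, and purify both reduced states through independent beam splitters $B_{a_i b_i}(\trans)$ acting with vacua on ancillary modes $b_i$. I would then pass to the sum/difference basis $a_\pm = (a_1 \pm a_2)/\sqrt{2}$ and $b_\pm = (b_1 \pm b_2)/\sqrt{2}$. Two algebraic identities drive everything: (i) $B_{a_1 b_1}(\trans) B_{a_2 b_2}(\trans)$ is the same unitary as $B_{a_+ b_+}(\trans) B_{a_- b_-}(\trans)$ in the new basis, since both have the same Heisenberg action on $a_\pm, b_\pm$; (ii) the swap becomes $S_{12} = \bbone_{a_+} \otimes (-1)^{a_-^\dagger a_-}$, because it sends $a_+ \to a_+,\, a_- \to -a_-$. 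Defining $\mu$ as the partial trace of $\rho_1 \otimes \sigma_1$ over the $a_+$ mode in this new basis and invoking trace preservation of loss on $a_+$, the overlap collapses to $O_\trans = \tr_{a_-}[\mathcal{E}_\trans[\mu] (-1)^{a_-^\dagger a_-}]$ with $\mu$ manifestly positive.

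Next I would evaluate this parity expectation as a generating function in $\cent$. Writing $q_k = \langle k|\mu|k\rangle \geq 0$ for the photon-number distribution of $\mu$, photon loss acts binomially: $\langle n|\mathcal{E}_\trans[\mu]|n\rangle = \sum_{k\geq n}\binom{k}{n}\trans^n(1-\trans)^{k-n}q_k$. Summing against $(-1)^n$ and exchanging the order of summation collapses the inner binomial, giving $O_\trans = \sum_k q_k \sum_{n=0}^k \binom{k}{n}(-\trans)^n(1-\trans)^{k-n} = \sum_k q_k (1-2\trans)^k = \sum_k q_k \cent^k$. All coefficients $q_k$ are nonnegative, and the series terminates as a polynomial when $\mu$ has bounded photon-number support, which happens in particular whenever $\rho_1$ and $\sigma_1$ do.

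The main conceptual obstacle I anticipate is spotting this balanced-beam-splitter change of basis, which simultaneously decouples the lossy evolution into two independent single-mode channels and turns the Hilbert--Schmidt swap into parity on the difference mode. Once those two algebraic identities are in hand, positivity is automatic from the partial trace of a tensor product of positive operators, and the binomial identity $\sum_n (-1)^n\binom{k}{n}\trans^n(1-\trans)^{k-n} = (1-2\trans)^k$ packages the result into the desired expansion.
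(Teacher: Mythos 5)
Your proof is correct and follows essentially the same route as the paper's: the swap-operator representation of the Hilbert--Schmidt inner product, the passage to the $a_\pm,b_\pm$ basis where the two loss channels factorize and the swap becomes parity on the difference mode, and nonnegativity of the coefficients from positivity of the reduced difference-mode state. The only cosmetic difference is that you finish with the binomial identity $\sum_n(-1)^n\binom{k}{n}\trans^n(1-\trans)^{k-n}=(1-2\trans)^k$ applied to the photon-number distribution, whereas the paper packages the same computation as the operator identity $\sum_{n\geq 0}(1-\trans)^n a^{\dagger n}(-\trans)^{a^\dagger a}a^n/n!=\sum_{m\geq 0}(1-2\trans)^m\ketbra{m}{m}$ and reads off the coefficients as traces against rotated projectors.
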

\begin{proof}
    Rewrite the inner product into a two-mode version using the swap operator $S=(-1)^{n_-}$ for $n_i=a^\dagger_i a_i$ and $a_\pm=(a_1\pm a_2)/\sqrt{2}$ as {$O_\trans=\tr[\op_\trans \otimes \opp_\trans S]$} (this method has been used to measure purity using two copies of a state~\cite{Bovinoetal2005,Islametal2015}). Treating loss as beam splitters to vacuum modes annihilated by $b_1$ and $b_2$, noting that $|0\rangle_{b_1}\!\otimes \!|0\rangle_{b_2}\!=\!|0\rangle_{b_+}\!\otimes |0\rangle_{b_-}$ and $B_{a_1b_1}(T)B_{a_2b_2}(T)\!=\!B_{a_+b_+}(T)B_{a_-b_-}(T)$, and using cyclic permutations inside the trace to set $\langle 0|B_{a_+ b_+}^\dagger B_{a_+b_+}|0\rangle_{b_+}=\openone_{a_+}$:
    \begin{equation}
\begin{aligned}
            \tr[&B_{a_1b_1}B_{a_2b_2}{\op_1}\otimes{\opp_1}\otimes \ketbra{0,0}{0,0}_{b_1,b_2}B_{a_1b_1}^\dag B_{a_2b_2}^\dag\ (-1)^{\hat{n}_{-}}]\\
    &=\tr[{\op_1}\otimes{\opp_1} \langle 0|B_{a_-b_-}^\dag\ (-1)^{\hat{n}_{-}} B_{a_-b_-}|0\rangle_{b_-}].
        \end{aligned}
    \end{equation} Then resolving the identity over states $|n\rangle_{b_-}$ and noting the Kraus operators for loss to be $$\langle n|B_{a_-b_-}|0\rangle_{b_-} =\sqrt{\trans}^{n_-}(\sqrt{1-\trans}a_- )^n/\sqrt{n!}$$ as in Lemma~\ref{lem:T vs 1-T}, we sum $$\sum_{n\geq 0}(1-\trans)^n a^{\dagger n} (-\trans)^{a^\dagger a}a^n/n!=\sum_{m\geq 0}(1\!-\!2T)^m |m\rangle\langle m|$$ and recognize $B_{a_1 a_2}(\tfrac{1}{2})a_2 B_{a_1 a_2}(\tfrac{1}{2})^\dagger=a_-$ to find 
    \begin{equation}
        O_\trans\!=\!\sum_{m\geq 0}\!\lambda^m \tr[({\op_1}\otimes{\opp_1}) B_{a_1a_2}(\tfrac{1}{2})(\openone \!\otimes\!|m\rangle\langle m|)B_{a_1 a_2}(\tfrac{1}{2})^\dagger].\nonumber
    \end{equation} Each term $B_{a_1a_2}(\tfrac{1}{2})(\openone \!\otimes\!|m\rangle\langle m|)B_{a_1 a_2}(\tfrac{1}{2})^\dagger$ is a projection operator and thus its trace against a positive operator is always positive.
\end{proof}




\section{Proof of Theorem 4}

{We explain here how to compute the product of the Schmidt coefficients of a finite-dimensional pure state subject to loss in order to prove our theorem that $G$-concurrence is log-concave and its corresponding entanglement monotone is maximized by balanced beam splitters.}
 \setcounter{thm}{3}
\begin{thm}[Log-concavity of concurrence]
\label{thm:conc}
    The logarithm of the $G$-concurrence of a finite-dimensional pure state subject to loss 
    is a concave function of loss; 
    $\forall\,  |\{\psi_n\}_n|<\infty, \trans \in [0, 1]$: $\partial^2  \log\conc/\partial \trans^2\leq~0$. 
    Any pure state interfered with the vacuum at a beam splitter thus generates the greatest $G$-concurrence entanglement (entanglement capacity~\cite{Gour2005}) when the beam splitter has transmission $\trans\!=\!1/2$.
\end{thm}
\setcounter{thm}{\getrefnumber{lem:HS positive}}

\begin{proof}
    For finite-dimensional states, there is a maximal $N$ such that $\psi_n=0$ for all $n>N$. Then the Schmidt matrix $M$ is an anti-triangular matrix with zeros below the main anti-diagonal.
    The singular values $\{s_0,\cdots,s_N\}$ of $M$ are complicated, but their product can be computed and is the only property needed in  $\det[\rho_\trans]=\prod_{i=0}^N s_i^2=\det{(MM^\dagger)}=|\det M|^2$. As an anti-triangular matrix, the determinant of $M$ is given by the product of the entries on the main anti-diagonal, all multiplied by $(-1)^{N}(-1)^{N-1}\cdots$. Together,
    with $d=N+1$
    \begin{equation}
        \begin{aligned}
            \conc_\psi(\trans)&=(N+1)\left|\prod_{m=0}^{N} M_{m,N-m}\right|^{\tfrac{2}{N+1}}\\&=(N+1)\left|\prod_{m=0}^{N} \psi_N\sqrt{\binom{N}{m}\trans^m (1-\trans)^{N-m}}\right|^{\tfrac{2}{N+1}}\\
            &=(N+1)|\psi_N|^2 \prod_{m=0}^N \binom{N}{m}^{\tfrac{1}{N+1}}\\
            &\quad\times\left(\trans^{\sum_{m=0}^N m} (1-\trans)^{\sum_{m=0}^N (N-m)}\right)^{\tfrac{1}{N+1}}
            \\
            &=(N+1)|\psi_N|^2[\trans(1-\trans)]^{\tfrac{N}{2}} \left(C_N\right)^{\tfrac{1}{N+1}}.
        \end{aligned}
    \end{equation}
The combinatorial integer $C_N = \prod_{m=0}^N \binom{N}{m}$ can also be expressed in terms of the Barnes $G$ function via $C_N = N!^N/(G(N+1) G(N+2)$ where $G(N) = \prod_{m=0}^{N-2} m!$. Similar to the concavity of the matrix function $\log\det A$, we see that 
\begin{equation}
\log\conc=\frac{N}{2}\log \left (\trans(1-\trans) \right ) + \log(|\psi_N|^2 (N+1)) + \frac{\log(C_N)}{N+1}.
\label{eq:conc log convex}
\end{equation}
This is manifestly concave with global maximum at $\trans\!=\!1/2$; the proof is completed using monotonicity of a logarithm with its argument. 
    The reason that only $\log \conc$ is concave while $\conc$ itself is only monotonic on either side of $\trans\!=\!1/2$ is because the latter is a positive power of a concave function $\trans(1-\trans)$, which does not guarantee concavity.
\end{proof}

{One could consider extending this proof to infinite-dimensional pure states. However, one encounters difficulties in convergence, as explained in the next section.}
\section{Convergence of determinants in infinite dimensions}
\begin{lem}[Entanglement convergence in infinite dimensions]\label{lem:inf}
    The Schmidt coefficients of an infinite-dimensional pure state with finite energy are equal to those found from a limit of finite-dimensional pure states. 
\end{lem}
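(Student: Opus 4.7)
\textbf{Proof plan for Lemma~\ref{lem:inf}.} The plan is to view the Schmidt matrix $M(\trans)$ of $\ket{\Psi_\trans}=B(\trans)\ket{\psi,0}$ as a Hilbert--Schmidt operator on $\ell^2(\N)$ and then invoke an infinite-dimensional Mirsky/Hoffman--Wielandt--type continuity theorem for singular values of compact operators. First, I would verify that $M(\trans)$ is indeed Hilbert--Schmidt: from $\sum_{m,n}|M_{mn}(\trans)|^2=\tr[M(\trans)M(\trans)^\dagger]=\tr[\rho_\trans]=1$, so $M(\trans)$ is compact and its singular values $\{s_k\}_{k\ge 0}$ are well defined, real, nonnegative, and decrease to zero.

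Next, I would define the truncated states $\ket{\psi^{(N)}}=\mathcal{N}_N\sum_{n=0}^N\psi_n\ket{n}$ with $\mathcal{N}_N\to 1$ by the normalization of $\ket\psi$, and set $M_N(\trans)$ to be its Schmidt matrix, padded by zeros to act on $\ell^2(\N)$. Using the explicit expression $M_{mn}(\trans)=\psi_{m+n}\sqrt{\binom{m+n}{m}\trans^m(1-\trans)^{m+n-m}}$ and the fact that unitarity of $B(\trans)$ preserves the total squared amplitude per input Fock component, I would bound
\begin{equation}
\|M(\trans)-M_N(\trans)\|_{\mathrm{HS}}^2 \leq |1-\mathcal{N}_N|^2+\!\!\sum_{k>N}|\psi_k|^2,\nonumber
\end{equation}
which tends to zero as $N\to\infty$ because $\sum_k|\psi_k|^2<\infty$ (the finite-energy hypothesis $\sum_k k|\psi_k|^2<\infty$ is in fact stronger than needed for this step, but ensures the limit lies in the physically relevant domain on which the master-equation manipulations of Lemma~\ref{lem:entr der diff} are licit).

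Finally, I would invoke the Mirsky/Hoffman--Wielandt inequality for Hilbert--Schmidt operators,
\begin{equation}
\sum_{k\geq 0}|s_k(M(\trans))-s_k(M_N(\trans))|^2\leq \|M(\trans)-M_N(\trans)\|_{\mathrm{HS}}^2,\nonumber
\end{equation}
to conclude that $s_k(M_N(\trans))\to s_k(M(\trans))$ for every $k$, equivalently that the eigenvalues of $\rho_\trans^{(N)}=M_NM_N^\dagger$ converge (with multiplicity) to those of $\rho_\trans=MM^\dagger$. The main obstacle is the last step: a naive index-by-index matching of eigenvalues can fail for non-compact operators, so the argument relies essentially on the fact that $M(\trans)$ is compact and the $M_N$ approximate it in the strong Hilbert--Schmidt sense rather than merely entry-wise. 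As already emphasized in the main text, the convergence of individual singular values does \emph{not} transfer to their product, because the number of nonzero singular values changes with $N$; that is precisely why Theorem~\ref{thm:conc} must be restricted to finite dimensions.
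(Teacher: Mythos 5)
Your proposal is correct, but it follows a genuinely different route from the paper's. The paper establishes that $M$ and $M_N$ are \emph{trace-class}, estimates $\|M-M_N\|_1$ via Gaussian asymptotics of the binomial kernel (which is why it needs finite energy, and even finite $\langle (a^\dagger a)^{3/2}\rangle$ at one point), and then invokes continuity of the Fredholm determinant $\det(\mathds{1}-M/\lambda)$ on trace-class operators to conclude that the spectra converge. You instead exploit the fact that the map $\ket{\psi}\mapsto M(\trans)$ is linear and isometric into the Hilbert--Schmidt class ($\|M\|_{\mathrm{HS}}^2=\tr[\rho_\trans]=1$), so that $\|M-M_N\|_{\mathrm{HS}}=\|\,\ket{\psi}-\ket{\psi^{(N)}}\,\|\to 0$ follows from normalizability alone, and then apply the infinite-dimensional Mirsky inequality (or, even more simply, the Weyl--Lipschitz bound $|s_k(A)-s_k(B)|\leq\|A-B\|$, which already gives index-by-index convergence). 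Your version buys weaker hypotheses --- finite energy is not needed for the convergence of individual Schmidt coefficients, only for placing the state in the physically relevant domain --- and it works directly with the singular values of $M$, i.e.\ the eigenvalues of $\rho_\trans=MM^\dagger$, whereas the paper's Fredholm argument is phrased in terms of the eigenvalues of $M$ itself, which is a slight conflation your route avoids. What the paper's trace-class framing buys in exchange is direct contact with determinants, the object relevant to the $G$-concurrence; but since both arguments agree that the \emph{product} of Schmidt coefficients is discontinuous under truncation (the number of factors changes with $N$), that advantage is moot for the intended application, and your conclusion matches the paper's caveat on Theorem~\ref{thm:conc}.
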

\begin{proof}
The squares of the Schmidt coefficients are the eigenvalues of the positive-definite matrix $M M^\dagger$.
    For a finite-dimensional state with maximal Fock number $N$, denote the matrix from which the Schmidt coefficients are found as $M_N$. We prove that the eigenvalues $\{\lambda_i\}$ of $M_N$ converge to those of $M$ as $M_N\to M$. First, note that the nonzero eigenvalues are the solutions of the equation $\det(\mathds{I}-A/\lambda)=0$ and that the Fredholm determinant $\det(\mathds{I}+A)$ is well-defined and continuous on trace-class operators $A$. 

    Next, defining the differences between the Fredholm determinants as $\delta =\left|\det(\mathds{1}-M/\lambda)-\det(\mathds{1}-M_N/\lambda)\right|$, the Fredhold determinant obeys the inequality
    \eq{
    \delta \leq ||M-M_N||_1 \exp[\max(||M_N||_1,||M||_1)/|\lambda|+1]/|\lambda|.
    } 

    The eigenvalues of $M$ and $M_N$ approach each other. This can be seen by verifying the trace $\tr[M]=\sum_{n=0}^\infty |\psi_n| \sqrt {p_{n/2}(n;\trans)}<\infty$ and the norms $\sum_{m\geq 0 ,n> N}|\psi_n|\sqrt{p_m(n;\trans)}=||M-M_N||_1<||M||_1=\sum_{mn\geq 0}|\psi_n|\sqrt{p_m(n;\trans)}<\infty$, where we use the binomial probability $p_m(n;\trans)=\binom{n}{m}\trans^{m}(1-\trans)^{n-m}$. When $n$ is large, the binomial distribution tends to a normal distribution with mean $\mu=n\trans$ and variance ${\sigma^2=n\trans(1-\trans)}$. The trace of $M$ sums over the peaks of the square roots of this distribution, with values $p_{\mathrm{max}}(n;\trans)\approx (\sqrt{2\pi}\sigma)^{-1}=1/\sqrt{2\pi \trans(1-\trans)n}$. Assume the state has finite energy $\infty >\langle a^\dagger a\rangle=\sum_n |\psi_n|^2 n$; then $|\psi_n|< \mathcal{O}(1/n)$. Then $\tr(M)< \sum_n \mathcal{O}(n^{-3/2})<\infty$ (this is also true under the weaker condition that the state's coefficients are normalizable, with $\sum_n |\psi_n|^2 <\infty \Rightarrow |\psi_n|<\mathcal{O}(1/\sqrt{n})$).

    Next, the one norms of the matrices are computed as the finite one norms of the $M_N$ plus the one norm of the extra components $||M||_1=||M_N||_1+||M-M_N||_1$. The latter, for sufficiently large $N$, can make use of the square root of a normal distribution, summed over all components. The square root of a normal distribution is also normal, with the same mean, but with a larger standard deviation $\sigma^2=2n\trans(1-\trans)$ and the whole distribution is multiplied by $\sqrt{\sqrt{2\pi n \trans(1-\trans)}}\sqrt{2}$. The sum over $p_m$ becomes an integral over the normal distribution, yielding the coefficient $\sim n^{1/4}$. We thus find $||M-M_N||_1=\sum_{n>N}|\psi_n|\mathcal{O}(n^{1/4})$. Choosing states with finite higher-order moments of energy, such as $\langle (a^\dagger a)^{3/2}\rangle <\infty \Rightarrow |\psi_n|<\mathcal{O}(n^{-5/2})$, ensures that $||M-M_N||_1<\sum_{n>N}\mathcal{O}(n^{1/4-5/4})<\infty$. Once these are all proven to be finite, the work is done: the norm decreases as $N$ gets bigger, one can make it as small as any parameter $\epsilon \to 0$, thus sending $\delta\to 0$.

    Since the eigenvalues of $M$ approach those of $M_N$, the products of the eigenvalues of $M_N$ approach those of $M$. Then, so too does the product of Schmidt coefficients of the state and thus the finite-dimensional entanglement monotone given by the product of Schmidt coefficients seems to converge in the limit of infinite dimensions. However, the entanglement monotone for $N$ is given by $\prod_{i=0}^N s_i$ and for $N+1$ by $s_{N+1}\prod_{i=0}^N s_i$. While the latter is continuous with $s_{N+1}$, there is a discontinuity between the former and the latter, especially noticeable when $s_{N+1}\approx 0$. Should the entanglement monotone have been defined as $\prod_{i=0}^\infty s_i$ for all states, it would have always been continuous, but it would have been vanishingly small for all states. The entanglement monotone defined with determinants of reduced density matrices is thus only applicable for finite-dimensional states.
\end{proof}

\section{R\'enyi entropies of Fock-state inputs}

The main text showed that the transmission parameter $\trans$ that maximizes the entanglement generated by a Fock state $|N\rangle$ does not always occur at $\trans\!=\!1/2$ when entanglement is measured by a R\'enyi entropy with sufficiently large index $\alpha$. There, it appeared that $\alpha=6$ was special for the state with $N=6$. We record in Table~\ref{tab:max renyi} the maximal integer value of $\alpha$ above which $\trans\!=\!1/2$ no longer optimizes the R\'enyi-$\alpha$ entropy of entanglement. We see that this value of $\alpha$ can be both above and below $N$, showcasing the richness of the problem when already restricted to Fock states; it is surely even more dramatic when all possible input states are considered.

\begin{table}[h!]
\caption{Largest integer R\'enyi entropy order such that an input Fock state maximizes entropy after losing $50\%$ of its photons and maximizes two-mode entanglement as measured by $H_\alpha$ at $\trans=1/2$.}
\label{tab:max renyi}
\begin{tabular}{cc}
Fock state number $N$ & Maximal R\'enyi index $\alpha$ \\ \hline
1 & $\infty$ \\
2 & 3 \\
3 & $\infty$ \\
4 & 5 \\
5 & $\infty$ \\
6 & 6 \\
7 & $\infty$ \\
8 & 8 \\
9 & $\infty$ \\
10 & 9 \\
11 & $\infty$ \\
12 & 11 \\
13 & $\infty$ \\
14 & 12 \\
15 & $\infty$ \\
16 & 13 \\
17 & $\infty$ \\
18 & 15 \\
19 & $\infty$ \\
20 & 16 
\end{tabular}
\end{table}

\end{document}